\theoremstyle{plain}
\newtheorem{proposition}{Proposition}
\newtheorem{theorem}{Theorem}
\theoremstyle{definition}
\newtheorem{definition}{Definition}
\newcommand{\yes}{\textbf{yes}\xspace}
\newcommand{\no}{\textbf{no}\xspace}
\newcommand{\OR}{\textsc{or}\xspace}
\newcommand{\vertexcover}{\textsc{Vertex Cover}\xspace}
\newcommand{\independentset}{\textsc{Independent Set}\xspace}
\newcommand{\clique}{\textsc{Clique}\xspace}
\newcommand{\hittingset}{\textsc{Hitting Set}\xspace}
\newcommand{\dhittingset}{$d$-\textsc{Hitting Set}\xspace}
\newcommand{\ILPFn}{\textsc{Integer Linear Program Feasibility($n$)}\xspace}
\newcommand{\ILPF}{\textsc{Integer Linear Program Feasibility}\xspace}
\newcommand{\ilpfn}{\textsc{ILPF($n$)}\xspace}
\newcommand{\RSILPF}{$r$-\textsc{Sparse Integer Linear Program Feasibility}\xspace}
\newcommand{\RSILPFn}{$r$-\textsc{Sparse Integer Linear Program Feasibility($n$)}\xspace}
\newcommand{\rsilpf}{$r$-\textup{SILPF}\xspace}
\newcommand{\rsilpfn}{$r$-\textup{SILPF($n$)}\xspace}
\newcommand{\RSBILPFnd}{$r$-\textsc{Sparse Bounded Integer Linear Program Feasibility($n$,$d$)}\xspace}
\newcommand{\sat}{\textup{SAT}\xspace}
\newcommand{\dsat}{$d$-\textup{SAT}\xspace}
\newcommand{\containment}{\textup{NP}~$\subseteq$~\textup{coNP/poly}\xspace}
\newcommand{\noncontainment}{\textup{NP}~$\nsubseteq$~\textup{coNP/poly}\xspace}
\renewcommand{\P}{\mathcal{P}}
\newcommand{\R}{\mathcal{R}}
\newcommand{\Oh}{\mathcal{O}}
\renewcommand{\i}{{i^*}}
\newcommand{\N}{\ensuremath{\mathbb{N}}\xspace}
\newcommand{\Q}{\ensuremath{\mathbb{Q}}\xspace}
\newcommand{\Z}{\ensuremath{\mathbb{Z}}\xspace}
\newcommand{\cref}[1]{(\ref{#1})\xspace}
\newcommand{\parameterizedproblem}[4]{%
\medskip
\noindent\fbox{\begin{minipage}{0.985\linewidth}
  #1 \\
  \textbf{Input:} #2 \\
  \textbf{Parameter:} #3\\ 
  \textbf{Output:} #4
\end{minipage}}
\medskip
}
\renewcommand{\subparagraph}[1]{\noindent\textbf{#1}}
\title{On Polynomial Kernels for Sparse Integer Linear Programs}
\author{Stefan Kratsch\thanks{Technical University Berlin, Germany, \texttt{stefan.kratsch@tu-berlin.de}}~\thanks{Work done in part at Utrecht University, the Netherlands, supported by the Netherlands Organization for Scientific Research (NWO), project ``KERNELS'', and in part at Max-Planck-Institute for Informatics, Saarbr\"ucken, Germany.}}
\date{}
\begin{document}

\maketitle

\begin{abstract}
Integer linear programs (ILPs) are a widely applied framework for dealing with combinatorial problems that arise in practice. It is known, e.g., by the success of CPLEX, that preprocessing and simplification can greatly speed up the process of optimizing an ILP. The present work seeks to further the theoretical understanding of preprocessing for ILPs by initiating a rigorous study within the framework of parameterized complexity and kernelization.

A famous result of Lenstra (Mathematics of Operations Research, 1983) shows that feasibility of any ILP with~$n$ variables and~$m$ constraints can be decided in time~$\Oh(c^{n^3}\cdot m^{c'})$. Thus, by a folklore argument, any such ILP admits a kernelization to an equivalent instance of size~$\Oh(c^{n^3})$. It is known, that unless \containment and the polynomial hierarchy collapses, no kernelization with size bound polynomial in~$n$ is possible. However, this lower bound only applies for the case when constraints may include an arbitrary number of variables since it follows from lower bounds for \sat and \hittingset, whose bounded arity variants admit polynomial kernelizations.

We consider the feasibility problem for ILPs~$Ax\leq b$ where~$A$ is an~$r$-row-sparse matrix parameterized by the number of variables. We show that the kernelizability of this problem depends strongly on the range of the variables. If the range is unbounded then this problem does not admit a polynomial kernelization unless \containment. If, on the other hand, the range of each variable is polynomially bounded in~$n$ then we do get a polynomial kernelization. Additionally, this holds also for the more general case when the maximum range~$d$ is an additional parameter, i.e., the size obtained is polynomial in~$n+d$.
\end{abstract}

\section{Introduction}

The present work seeks to initiate a study of the preprocessing properties of integer linear programs (ILPs) within the framework of parameterized complexity. Generally, preprocessing (or data reduction) is a universal strategy for coping with combinatorially hard problems and can be combined with other strategies like approximation, brute-force, exact exponential-time algorithms, local search, or heuristics. Unlike those other approaches, preprocessing itself incurs only a polynomial-time cost and is error free (or, in rare cases, with negligible error); recall that under standard assumptions we do not expect to exactly solve any NP-hard problem in polynomial time. Thus, preprocessing before applying other paradigms is essentially free and saves solution quality and/or runtime on parts of the input that are sufficiently easy to handle in polynomial time (see e.g.~\cite{Weihe1998_trains}). For a long time, preprocessing has been neglected in theoretical research for lack of appropriate tools\footnote{In fact, it has been observed that no polynomial-time algorithm can shrink all instances of some NP-hard problem unless P~$=$~NP~\cite{HarnikN10}; this issue can be avoided in parameterized complexity.} and research was limited to experimental evaluation of preprocessing strategies. The introduction of parameterized complexity and its notion of kernelization has sparked a strong interest in theoretically studying preprocessing with proven upper and lower bounds on its performance.

Integer linear programs are widely applied in theory and practice. There is a huge body of scientific literature on ILPs both as a topic of research itself and as a tool for solving other problems. From a theoretical perspective, many fundamental problems that revolve around ILPs are hard, e.g., checking feasibility of a $0/1$-ILP is NP-hard by an easy reduction from the classic \textsc{Satisfiability} problem~\cite{GareyJ79}. Similarly, it is easy to express \vertexcover or \independentset, thus showing that simple covering and packing ILPs are NP-hard to optimize. Thus, for worst-case complexity considerations, the high expressive power of ILPs comes at the price of encompassing plenty of hard problems and, effectively, inheriting all their lower bounds (e.g., approximability).

In practice, the expressive power of ILPs makes them a versatile framework for encoding and solving many combinatorially hard problems. Coupled with powerful software packages for optimizing ILPs this has created a viable way for solving many practical problems on real-world instances. We refer to a survey of Atamt\"urk and Savelsbergh~\cite{AtamturkS05} for an explanation of the capabilities of modern ILP solvers; this includes techniques such as probing and coefficient reduction. One of the most well-known solvers is the CPLEX package, which is, in particular, known for its extensive preprocessing options and parameters.\footnote{The interested reader is referred to the online documentation and manual of ILOG CPLEX 12.4 at \texttt{http://pic.dhe.ibm.com/infocenter/cosinfoc/v12r4/index.jsp} (see ``presolve'', ``preprocessing'').} It is known that appropriate preprocessing and simplification of ILPs can lead to strong improvements in running time, e.g., reducing the range of variables or eliminating them altogether, or reducing the number of constraints. 
Given the large number of options that a user has for controlling the preprocessing in CPLEX, e.g., the number of substitution rounds to reduce rows and columns, this involves some amount of engineering and has a more heuristic flavor. In particular, there are no performance guarantees for the effect of the preprocessing.

Naturally, this leads to the question of whether there are theoretical performance guarantees for the viability of preprocessing for ILPs. To pursue this question in a rigorous and formal way, we take the perspective of parameterized complexity and its notion of (polynomial) kernelization. Parameterized complexity studies classical problems in a more fine-grained way by introducing one or more additional parameters and analyzing time- and space-usage as functions of input size and parameter. In particular, by formalizing a notion of \emph{fixed-parameter tractability}, which requires efficient algorithms when the parameter is small, this makes the parameter a quantitative indicator of the hardness of a given instance (see Section~\ref{section:preliminaries} for formal definitions). This in turn permits us to formalize preprocessing as a reduction to an equivalent instance of size bounded in the parameter, a so-called \emph{kernelization}. The intuition is that relatively easy instances should be reducible to a computationally hard, but small core, and we do not expect to reduce instances that are already fairly hard compared to their size (e.g., instances that are already reduced). While classically, no efficient algorithm can shrink \emph{each} instance of an NP-hard problem~\cite{HarnikN10}, the notion of kernelization has been successfully applied to a multitude of problems (see recent surveys by Guo and Niedermeier~\cite{GuoN07} and Bodlaender~\cite{Bodlaender09}). Due to many interesting upper bound results (e.g.,~\cite{BodlaenderFLPST09_meta,FominLST10,KratschW12_focs}) but also the fairly recent development of a lower bound framework for polynomial kernels \cite{HarnikN10,FortnowS11,BodlaenderDFH09,DellM10}, the existence or non-existence of polynomial kernels (which reduce to size polynomial in the parameter) is receiving high interest.

In this work, we focus on the effect that the dimension, i.e., the number of variables, has on the preprocessing properties of ILPs. Feasibility and optimization of ILPs with low dimension has been studied extensively already, see e.g.~\cite{Kannan80,Kannan1983,Lenstra1983,Megiddo1984,Kannan87,Clarkson1995,Eisenbrand03,EisenbrandS08}. The most important result for our purpose is a well-known work of Lenstra~\cite{Lenstra1983}, who showed that feasibility of an ILP with~$n$ variables and~$m$ constraints can be decided in time~$\Oh(c^{n^3}\cdot m^{\Oh(1)})$; this also means that the problem is fixed-parameter tractable with respect to~$n$. 
This has been improved further, amongst others by Kannan~\cite{Kannan87} to~$\Oh(n^{\Oh(n)})$ dependence on the dimension and by Clarkson~\cite{Clarkson1995} to (expected)~$\Oh((cn)^{n/2+\Oh(1)})$ dependence.
We take these results as our starting point and consider the problem of determining feasibility of a given ILP parameterized by the number of variables, formally defined as follows.

\parameterizedproblem{\ILPFn -- \ilpfn}{A matrix~$A\in \Q^{m\times n}$ and a vector~$b\in\Q^m$.}{$n.$}{Is there a vector~$x\in\Z^n$ such that~$Ax\leq b$?}

It is known by a simple folklore argument that any parameterized problem is fixed-parameter tractable if and only if it admits a kernelization; unfortunately the implied size guarantee is usually impractical as it is exponential in the parameter. As an example, using the runtime given by Kannan~\cite{Kannan87} we only get a kernel size of~$\Oh(n^{cn})$.\footnote{If the instance is larger than~$\Oh(n^{cn})$, then Kannan's algorithm runs in polynomial time and we may simply return the answer or a trivial \yes- or \no-instance. Otherwise, the claimed bound trivially holds.} Unsurprisingly, we are more interested in what kernel sizes can be achieved by nontrivial preprocessing rules. In particular, we are interested in the conditions under which an ILP with~$n$ variables can be reduced to size polynomial in~$n$, i.e., in the existence of polynomial kernels for \ILPFn.

\subparagraph{Related work.} Regarding the existence of polynomial kernels for \ILPFn only little is known. In general, parameterized by the number of variables, \ilpfn admits no polynomial kernelization unless \containment and the polynomial hierarchy collapses. This follows for example from the results of Dell and van Melkebeek~\cite{DellM10} regarding lower bounds for the compressibility of the satisfiability problem, since there is an immediate reduction from \sat to \ilpfn. Similarly, it follows also from earlier results of Dom et al.~\cite{DomLS09} who showed that \hittingset parameterized by the universe size admits no polynomial kernelization under the same assumption.

We note that both ways of excluding polynomial kernels for \ILPFn use reductions from problems with unbounded arity. Crucially, both \dhittingset and \dsat admit polynomial kernels of size roughly~$\Oh(n^d)$, where~$n$ is the number of elements and variables respectively, which can be obtained trivially by discarding duplicate sets or clauses, respectively. Surprisingly perhaps, the work of Dell and van Melkebeek~\cite{DellM10} shows that these bounds are tight, assuming \noncontainment, i.e., there are no reductions to size~$\Oh(n^{d-\epsilon})$ for any~$\epsilon>0$. We emphasize that this also implies the lower bound of \ILPFn since it can express, e.g., \hittingset with sets of unbounded size (exceeding any constant~$d$).

Motivated by these facts about the kernelization lower bound for \ILPFn and the existing straightforward polynomial kernels for \dhittingset and \dsat, we study the influence of arity on the existence of polynomial kernels for \ilpfn. Regarding the considered integer linear programs with constraints~$Ax\leq b$ this translates to~$A$ being~$r$-row-sparse, i.e., to have at most~$r$ nonzero entries in each row. (This is equivalent to requiring that each constraint has at most~$r$ variables with nonzero coefficients.)

\subparagraph{Our results.} We study \ILPFn for the case that the constraint matrix~$A$ is~$r$-row-sparse; we call this problem \RSILPFn (\rsilpfn). Note that~$r$ is a constant that is fixed as a part of the problem (it makes no sense to study~$r$ as an additional parameter since we already know that constraints involve at most all~$n$ variables, but already for \sat parameterized by the number of variables this is not enough to avoid a kernelization lower bound).

Our main result is that \rsilpfn admits no polynomial kernelization for any~$r\geq 3$, unless \containment.
Thus we see that unlike the simpler problems \dhittingset and \dsat, a restriction on the arity (or row-sparseness) is not enough to ensure a polynomial kernelization. For this result we give a cross-composition (introduced by Bodlaender et al.~\cite{BodlaenderJK11}; see Section~\ref{section:preliminaries}) from \clique to \rsilpfn. Concretely, we encode~$t$ instances of \clique into a single instance of \rsilpfn with parameter value bounded polynomially in the largest \clique instance plus~$\log t$, such that our obtained instance is \yes if and only if at least one of the \clique instances is \yes. This is presented in Section~\ref{section:sparselowerbound}. The lower bound can be seen to also apply to the case of parameterization by~$n+\log C$ where~$C$ is the largest absolute value of any coefficient (this refers to integer coefficients which can be obtained by scaling, or, alternatively, one could use the binary encoding size in place of~$\log C$); this is interesting since an ILP with~$n$ variables and~$m$ constraints can be trivially encoded in space~$\Oh(nm\log C)$.

Unlike other proofs via compositions or cross-compositions, the parameterization by the number of variables combined with the row-sparseness restriction prevent many standard tricks. For example, without the row-sparseness we could simply encode the selection of an instance number of one of the~$t$ \clique instances. Then we could add constraints that encode all the edges of the input graphs, but which are only valid when the binary encoding of the instance number matches the constraint. Unfortunately, this involves constraints with~$\Oh(\log t)$ variables.\footnote{We can emulate a few such constraints by use of auxiliary variables, but we cannot afford to do this for the constraints corresponding to all~$t$ instances.} (Of course without row-sparseness, a lower bound is known already.) Similarly, if we could use~$t$ slack variables we could very easily control the constraints and have only those for a single instance of \clique be relevant; however, we cannot afford this.

Our solution goes by using a significantly larger domain for the variables that encode the selection of a clique in one of the~$t$ input graphs. We use a variable~$s$ for the instance number, and add (linear) constraints that enforce~$r=s^2$. This permits us to use indicator variables for the desired clique whose feasible values depend quadratically on the chosen instance number. Accordingly, we can arrange the constraints for the edges of all input graphs~$G_i$, such that they intersect this feasible region when~$i=s$. In this way, depending on~$s$, only the constraints from one instance will restrict the choice of values for the indicator variables (beyond the restriction imposed directly by~$s$ and~$r=s^2$).

Complementing our lower bound, and recalling the large domain required for the construction, we analyze the effect of the maximum variable range on the preprocessing. It turns out that we can efficiently reduce row-sparse ILPs of form~$Ax\leq b$ to a size that is polynomial in~$n+d$, where~$n$ is the number of variables and~$d$ is the maximum range of any variable. In other words, \RSILPF admits a polynomial kernelization with respect to the combined parameter~$n+d$, or when~$d$ is polynomially bounded in~$n$; this is showed in Section~\ref{section:sparsedomainboundedkernel}. Together our upper and lower bound show that the existence for \RSILPF depends strongly on the permitted range for the variables. Note that, our lower bound proof (Section~\ref{section:sparselowerbound}) allows the conclusion that parameterization by~$n+\log d$ is not enough to allow a polynomial kernelization: The maximum value of any variable is polynomial in~$t$, implying that~$\log d=\Oh(\log t)$ which suffices for a cross-composition (see Definition~\ref{definition:crosscomposition}). We emphasize that small range without row-sparseness does not suffice by the mentioned reductions from \sat and \hittingset.

Furthermore, let us point out that for the case of an ILP of form~$Ax=b$,~$x\geq 0$, with~$n$ variables, Gaussian elimination suffices to reduce the number of constraints to~$n$, but still leaves the remaining problem of reducing the size of the coefficients in order to obtain a polynomial kernelization. Note that, while in general there are trivial transformations between~$Ax\leq b$ and~$A'x'=b'$, going from~$Ax\leq b$ to~$A'x'= b'$ uses one slack variable per constraint and hence would increase our parameter (the number of variables) by the number of constraints; this would make any further reduction arguments pointless.

\section{Preliminaries}\label{section:preliminaries}

\subparagraph{Parameterized complexity and kernelization.} A \emph{parameterized problem} over some finite alphabet~$\Sigma$ is a language~$\P\subseteq\Sigma^*\times\N$. The problem~$\P$ is \emph{fixed-parameter tractable} if~$(x,k)\in\P$ can be decided in time~$f(k)\cdot(|x|+k)^{\Oh(1)}$, where~$f$ is an arbitrary computable function. A polynomial-time algorithm~$K$ is a kernelization for~$\P$ if, given input~$(x,k)$, it computes an equivalent instance~$(x',k')$ with~$|x'|+k'\leq h(k)$ where~$h$ is some computable function;~$K$ is a \emph{polynomial} kernelization if~$h$ is polynomially bounded (in~$k$). By relaxing the restriction that the created instance~$(x',k')$ must be of the same problem and allow the output to be an instance of any classical decision problem we get the notion of \emph{(polynomial) compression}.

For our lower bound proof we use the concept of an (\OR-)cross-composition of Bodlaender et al.~\cite{BodlaenderJK11} which builds on a series of earlier results~\cite{FortnowS11,BodlaenderDFH09,DellM10} that created a framework for ruling out polynomial kernelizations for certain problems.

\begin{definition}[\cite{BodlaenderJK11}] \label{polyEquivalenceRelation}
An equivalence relation~$\R$ on~$\Sigma^*$ is called a \emph{polynomial equivalence relation} if the following two conditions hold:\\
1. There is a polynomial-time algorithm that decides whether two strings belong to the same equivalence class (time polynomial in~$|x|+|y|$ for~$x,y\in\Sigma^*)$.\\
2. For any finite set~$S \subseteq \Sigma^*$ the equivalence relation~$\R$ partitions the elements of~$S$ into a number of classes that is polynomially bounded in the size of the largest element of~$S$.
\end{definition}

\begin{definition}[\cite{BodlaenderJK11}]\label{definition:crosscomposition}
Let~$L\subseteq\Sigma^*$ be a language, let~$\R$ be a polynomial equivalence relation on~$\Sigma^*$, and let~$\P\subseteq\Sigma^*\times\N$ be a parameterized problem. An \emph{\OR-cross-com\-position of~$L$ into~$\P$} (with respect to $\R$) is an algorithm that, given~$t$ instances~$x_1, x_2, \ldots, x_t \in \Sigma^*$ of~$L$ belonging to the same equivalence class of~$\R$, takes time polynomial in~$\sum _{i=1}^t |x_i|$ and outputs an instance~$(y,k) \in \Sigma^* \times \mathbb{N}$ such that:\\
1. The parameter value~$k$ is polynomially bounded in~$\max_i|x_i|+\log t$.\\
2. The instance~$(y,k)$ is \yes for~$\P$ if and only if \emph{at least one} instance~$x_i$ is \yes for~$L$.\\
We then say that~$L$ \OR-cross-composes into~$\P$.
\end{definition}

\begin{theorem}[\cite{BodlaenderJK11}]\label{theorem:crosscomposition}
If an NP-hard language~$L$ \OR-cross-composes into the parameterized problem~$\P$, then~$\P$ does not admit a polynomial kernelization or polynomial compression unless \containment and the polynomial hierarchy collapses.
\end{theorem}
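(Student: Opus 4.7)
The plan is to prove this by contradiction: compose a hypothetical polynomial kernelization of $\P$ with the \OR-cross-composition of $L$ to obtain what is essentially a weak \OR-distillation of $L$, and then invoke the lower bound machinery of Fortnow--Santhanam (for kernelization) and Dell--van Melkebeek (for polynomial compression) to derive \containment.

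Concretely, suppose $\P$ admits a polynomial kernelization $K$ whose output has size at most $k^c$ for some constant $c$. Given $N$ strings $y_1,\ldots,y_N$ of $L$, each of length at most $s$, first sort them into the classes of $\R$; since $\R$ induces at most $p(s)$ classes by Definition~\ref{polyEquivalenceRelation}, the pigeonhole principle supplies a class $C$ containing $t \ge N/p(s)$ of the inputs. Apply the \OR-cross-composition to $C$ to obtain an instance $(y,k)$ of $\P$ whose answer equals the disjunction over $C$ and whose parameter satisfies $k \le q(s+\log t)$ for a polynomial $q$; then run $K$ on $(y,k)$ to compress the result to an instance of total bit-length polynomial in $s + \log N$. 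Repeating across the (polynomially many) equivalence classes and combining the outputs via a trivial top-level OR gate yields a polynomial-time map that takes $N$ arbitrary instances of $L$ of length at most $s$ to an output of length $\mathrm{poly}(s + \log N)$ whose \yes-ness equals $\bigvee_{i=1}^N (y_i \in L)$.

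By choosing $N$ to grow sufficiently quickly in $s$, the output length becomes polynomial in $s$ while the total input length $\Omega(Ns)$ is much larger. This is precisely a weak \OR-distillation of $L$, and because $L$ is NP-hard the Fortnow--Santhanam theorem (in the kernelization setting) or the complementary-witness lemma of Dell--van Melkebeek (in the compression setting) then yields \containment and a collapse of the polynomial hierarchy.

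The main obstacle is the structural mismatch between \OR-cross-composition and genuine \OR-distillation: cross-composition operates only within equivalence classes of $\R$, whereas a distillation must OR arbitrary input lists. Bridging this gap is exactly the role of the polynomial class-count guarantee of Definition~\ref{polyEquivalenceRelation}; one must verify carefully that passing through $p(s)$ separate class-level compressions and a final OR-layer does not break the distillation size budget required by the hypothesis-testing step. Once this is settled, the remainder is routine bookkeeping on polynomial exponents.
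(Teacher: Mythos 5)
Your argument is correct and is essentially the standard proof of this theorem from the cited reference~\cite{BodlaenderJK11}; the paper itself imports the statement without proof. Your key steps --- partitioning the $N$ inputs into the polynomially many classes of $\R$, composing and then compressing within each class, taking a top-level \OR of the resulting $\mathrm{poly}(s+\log N)$-sized outputs, and feeding the resulting weak \OR-distillation of the NP-hard language $L$ into the Fortnow--Santhanam/Dell--van Melkebeek machinery --- match the original argument, including the correct identification of the class-count bound in Definition~\ref{polyEquivalenceRelation} as the bridge between cross-composition and genuine distillation.
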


\section{A kernelization lower bound for sparse ILP Feasibility}\label{section:sparselowerbound}

In this section we show our main result, namely that a restriction to row-sparse matrices is not enough to ensure a polynomial kernelization for \ILPF parameterized by the number of variables. The problem is defined as follows.

\parameterizedproblem{$r$-\textsc{Sparse Integer Linear Programming Feasibility}($n$) -- \rsilpfn}{An~$r$-row-sparse matrix~$A\in \Q^{m\times n}$ and a vector~$b\in\Q^m$.}{$n.$}{Is there a vector~$x\in\Z^n$ such that~$Ax\leq b$?}

To prove the kernelization lower bound for \rsilpf we give an \OR-cross-composition from the NP-hard \clique problem, i.e., a reduction of many \clique instances into a single instance of \rsilpf. The idea behind the construction is to use a fairly large domain in order to recycle the same variables for the constraints that correspond to many different instances.

As a first step we state two propositions which together allow us to ``compute'' the square of a variable inside an ILP, i.e., to add constraints such that some variable is exactly the square of another in all feasible solutions.

\begin{proposition}\label{proposition:multiplication}
Let~$s_i$,~$s_j$,~$s_{ij}$, and~$d_{ij}$ denote integer variables with range~$\{0,1\}$ each. Then any feasible assignment for~$s_{ij}=\frac{1}{2}(s_i+s_j-d_{ij})$ satisfies~$s_{ij}=s_i\cdot s_j$. Conversely, for any choice of~$s_i$,~$s_j$, and~$s_{ij}$ such that~$s_{ij}=s_i\cdot s_j$, there is a choice of~$d_{ij}\in\{0,1\}$ such that~$s_{ij}=\frac{1}{2}(s_i+s_j-d_{ij})$ holds.
\end{proposition}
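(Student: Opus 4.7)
The plan is to rewrite the equality $s_{ij} = \frac{1}{2}(s_i + s_j - d_{ij})$ in the equivalent form $2 s_{ij} + d_{ij} = s_i + s_j$, which treats $d_{ij}$ as the ``carry'' produced when adding the two bits $s_i$ and $s_j$. Since $s_i, s_j \in \{0,1\}$, the right-hand side lies in $\{0,1,2\}$, and once I restrict $s_{ij}, d_{ij} \in \{0,1\}$, every value in $\{0,1,2\}$ admits a unique representation of the form $2 s_{ij} + d_{ij}$. So the constraint determines $(s_{ij}, d_{ij})$ completely from $s_i + s_j$.

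For the forward implication I would enumerate the three possible values of $s_i + s_j$. If $s_i + s_j \in \{0,2\}$, the constraint forces $d_{ij} = 0$ and $s_{ij} = \frac{1}{2}(s_i + s_j)$, which agrees with $s_i \cdot s_j$. If $s_i + s_j = 1$, then integrality of $s_{ij}$ together with $d_{ij} \in \{0,1\}$ forces $d_{ij} = 1$ and $s_{ij} = 0 = s_i \cdot s_j$. Hence every feasible assignment satisfies $s_{ij} = s_i s_j$.

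For the converse, given any triple $(s_i, s_j, s_{ij})$ with $s_{ij} = s_i s_j$, I would simply set $d_{ij} := s_i + s_j - 2 s_{ij}$; the same three-case inspection confirms $d_{ij} \in \{0,1\}$, and the defining identity then holds by construction.

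There is no real obstacle: the proposition is essentially the half-adder identity dressed up as a linear equation over $\{0,1\}$-variables, and all of its content is the brief case analysis above. The reason to isolate it as a separate statement is that it will be combined with the next proposition to implement the squaring operation $r = s^2$ inside the ILP, which is the actual gadget needed for the cross-composition.
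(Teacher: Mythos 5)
Your proof is correct and is exactly the straightforward case analysis the paper has in mind; the paper in fact omits the proof of this proposition entirely, treating it as immediate. Rewriting the constraint as $2s_{ij}+d_{ij}=s_i+s_j$ and checking the three values of $s_i+s_j$ (with $d_{ij}=s_i+s_j-2s_is_j$ for the converse) completely settles the claim.
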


\begin{proposition}\label{proposition:square}
Let~$s\in\{0,\ldots,t-1\}$ with~$t=2^\ell$ and let~$s_0,\ldots,s_{\ell-1}\in\{0,1\}$ denote the binary expansion of~$s$, i.e.,~$s=\sum_{i=0}^{\ell-1}2^is_i$. Then
\[
s^2=\sum_{i=0}^{\ell-1}\left(\sum_{j=0}^{\ell-1}2^{i+j}s_i\cdot s_j\right).
\]
\end{proposition}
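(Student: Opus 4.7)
The plan is to verify the stated identity by direct algebraic manipulation, since this proposition is essentially the statement that squaring a number can be expressed via the pairwise products of the bits of its binary expansion. No combinatorial or structural insight is needed beyond distributivity of multiplication over addition.

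First, I would start from the hypothesis $s = \sum_{i=0}^{\ell-1} 2^i s_i$ and simply square both sides, writing $s^2 = s \cdot s$ as the product of two copies of the sum, using distinct dummy indices $i$ and $j$ in the two factors. Then I would distribute to obtain
\[
s^2 = \left(\sum_{i=0}^{\ell-1} 2^i s_i\right)\left(\sum_{j=0}^{\ell-1} 2^j s_j\right) = \sum_{i=0}^{\ell-1}\sum_{j=0}^{\ell-1} 2^i s_i \cdot 2^j s_j = \sum_{i=0}^{\ell-1}\sum_{j=0}^{\ell-1} 2^{i+j} s_i s_j,
\]
which is exactly the right-hand side of the claim.

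There is no genuine obstacle here; the only thing to note is that we are not claiming that the expression on the right is itself a valid binary representation of $s^2$ (indeed, many of the terms $2^{i+j} s_i s_j$ can contribute to the same power of $2$, so carries would have to be resolved to obtain one), but only that the sum of these terms, treated as integers, equals $s^2$. The significance for the construction in Section~\ref{section:sparselowerbound} is that, combined with Proposition~\ref{proposition:multiplication}, each product $s_i \cdot s_j$ can be represented by an auxiliary $\{0,1\}$-variable $s_{ij}$ subject to a single $4$-variable equality; hence $s^2$ becomes a linear function of the variables $s_{ij}$ with coefficients $2^{i+j}$, which is precisely what is required to enforce $r = s^2$ via sparse linear constraints in the cross-composition.
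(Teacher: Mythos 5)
Your proof is correct and is the immediate argument: square the binary expansion and distribute, which is exactly why the paper states this proposition without further justification. The remark that the right-hand side is not a carry-free binary representation but merely an integer identity is a nice clarification, though not needed.
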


Together the two propositions provide a way of forcing some variable in an ILP to take a value exactly equal to the square of another value. If~$s\in\{0,\ldots,t-1\}$ this requires~$\Oh(\log^2 t)$ auxiliary variables and~$\Oh(\log^2 t)$ constraints.
Now we will give our construction.

\begin{theorem}\label{theorem:sparselowerbound}
Let~$r\geq 3$ be an integer. The \rsilpf problem does not admit a polynomial kernelization or compression unless \containment and the polynomial hierarchy collapses.
\end{theorem}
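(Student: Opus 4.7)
The plan is to apply Theorem~\ref{theorem:crosscomposition} by giving an \OR-cross-composition from the NP-hard \clique problem to \rsilpfn; since row-sparsity~$3$ is a special case of row-sparsity~$r$ for any $r\geq 3$, it suffices to carry out the construction with every row containing at most three nonzero entries. First I would fix the polynomial equivalence relation on \clique instances that puts two inputs into the same class if and only if they have the same vertex count $n$ and the same clique threshold $k$ (with malformed inputs grouped into one dummy class), and pad with trivial \no-instances so that the number $t$ of instances in a class is a power of two, $t=2^\ell$.

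The heart of the construction is the quadratic-slack trick hinted at in the introduction. With an integer variable $s\in\{0,\dots,t-1\}$ selecting an instance and an auxiliary $r$ forced to equal $s^2$, the quantity $(s-i)^2=r-2is+i^2$ is a nonnegative integer that vanishes exactly when $s=i$. Thus an inequality of the form
\[
x_{j_1,u}+x_{j_2,v}\leq 1+(s-i)^2
\]
acts as a genuine packing constraint precisely for the chosen instance and is vacuous on all others. Its naive form involves four variables, but introducing a single auxiliary $a_{j_1,u,j_2,v}$ with $a_{j_1,u,j_2,v}=x_{j_1,u}+x_{j_2,v}$ lets me rewrite it as the $3$-sparse row
\[
a_{j_1,u,j_2,v}+2is-r\leq 1+i^2,
\]
with the defining equality for $a_{j_1,u,j_2,v}$ itself being $3$-sparse. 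Crucially the $a_{j_1,u,j_2,v}$ are shared across all instances, so their total number is polynomial in $n$ and $k$ rather than in $t$.

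Next I would build the remaining ILP so that everything stays $3$-sparse. The variable $s$ is tied to its $\ell$ binary digits $s_0,\dots,s_{\ell-1}\in\{0,1\}$ by a chain of three-variable partial-sum equalities; the products $s_{ij}=s_is_j$ are realized via Proposition~\ref{proposition:multiplication}, whose four-variable equality is split into two three-variable ones using an extra auxiliary $e_{ij}=s_i+s_j$; and Proposition~\ref{proposition:square} together with another chain of partial sums enforces $r=\sum_{i,j}2^{i+j}s_{ij}=s^2$. The clique itself is encoded by $0/1$ indicators $x_{j,v}$ for $j\in\{1,\dots,k\}$ and $v\in\{1,\dots,n\}$, with $\sum_v x_{j,v}=1$ again implemented by a partial-sum chain, and with distinctness enforced by the $2$-sparse inequalities $x_{j_1,v}+x_{j_2,v}\leq 1$. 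For each instance $i$, each non-edge $\{u,v\}$ of $G_i$, and each pair $j_1<j_2$, I add the main inequality above.

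For correctness, the \yes direction picks a satisfied instance $\i$, sets $s=\i$ (propagating all auxiliaries), and reads the clique off the $x_{j,v}$; the \no direction reads $s\in\{0,\dots,t-1\}$ from the solution and uses the indicators to exhibit a $k$-clique in $G_s$, where the main inequalities guarantee that every selected pair of vertices is an edge. The variable count is $\Oh(k^2n^2+\ell^2)$, which is polynomial in $\max_i|x_i|+\log t$ as required by Definition~\ref{definition:crosscomposition}, and the total size of the ILP is polynomial in $\sum_i|x_i|$. The main technical obstacle is the disciplined row-sparsification: every arithmetic expression that naturally contains more than two summands (the binary expansion of $s$, the product gadget, the double sum for $s^2$, the degree equalities, and the quadratic-slack edge constraint itself) must be implemented as a chain of three-variable auxiliary equalities, and this bookkeeping has to be verified gadget by gadget.
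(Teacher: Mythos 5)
Your proposal is correct in its essentials and rests on exactly the key idea of the paper's proof: an \OR-cross-composition from \clique in which an instance selector~$s$ is squared via its binary expansion (Propositions~\ref{proposition:multiplication} and~\ref{proposition:square}) so that the slack~$(s-i)^2-1$ is nonnegative precisely when~$i\neq s$, rendering the constraints of every non-selected instance vacuous. Where you diverge is in the clique gadget and in how the per-instance constraint is made $3$-sparse. The paper uses a single variable~$y_v$ per vertex, confined by $2$-sparse bounds to the two values~$2ts-r+1$ and~$2ts-r+2$; because the dependence on~$r$ is folded into the variables' admissible values, the non-edge constraint~$y_u+y_v\leq 4(t-i)s+2i^2+3$ already has only three nonzero coefficients ($y_u$, $y_v$, $s$) and needs no auxiliary splitting, and a single cardinality constraint~$\bigl(\sum_v y_v\bigr)-n(2ts-r+1)\geq k$ extracts the clique size. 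You instead use a positional~$k\times n$ encoding~$x_{j,v}$ together with~$\Oh(k^2n^2)$ shared auxiliaries~$a_{j_1,u,j_2,v}=x_{j_1,u}+x_{j_2,v}$, which costs more variables but remains polynomial in~$\max_i|x_i|+\log t$, so the cross-composition still goes through. Two small points to tighten: since non-edges are unordered pairs while your positions are ordered, you must add your main inequality for both assignments of~$\{u,v\}$ to the positions~$(j_1,j_2)$ (equivalently, for all ordered pairs~$j_1\neq j_2$), otherwise a feasible solution could place~$u$ in position~$j_2$ and~$v$ in position~$j_1$ across a non-edge; and the trivial case~$k\leq 1$ should be dispatched before padding with \no-instances. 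Neither affects the validity of the approach.
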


\begin{proof}
We give an \OR-cross-composition from the NP-hard \clique problem. Let~$t$ instances of \clique be given. By a polynomial equivalence relation that partitions instances according to number of vertices and requested clique size it suffices to consider instances that ask for the same clique size~$k$ and such that each input graph has~$n$ vertices. We denote the instances~$(G_0,k),\ldots,(G_{t-1},k)$; for convenience, assume that all~$t$ graphs have the same vertex set~$V$ and edge sets~$E_i$ for~$i\in\{0,\ldots,t-1\}$. We will create a single instance of \RSILPFn that is \yes if and only if at least one instance~$(G_i,k)$ is \yes for \clique. Without loss of generality, we assume that~$t=2^\ell$; otherwise we could copy some instance sufficiently often (at most doubling the input size).

\subparagraph{Construction--essential part.} For the sake of readability we first describe the matrix~$A$ by writing down the constraints in a succinct way ignoring the sparsity requirement; there will be a small number of constraints on more than three variables which will be converted later. We also specify explicit ranges for the variables which can be enforced by the obvious constraints. Note that~$n$,~$t$,~$\ell$,~$k$,~$i$, and~$j$ are constants in the ILP;~$i$ and~$j$ are used in sums but the expansion of each sum is a constraint where~$i$ and~$j$ have constant values.

The first group of variables, namely~$s$ and~$s_0,\ldots,s_{\ell-1}$ serve to pick an instance number~$s\in\{0,\ldots,t-1\}$ and enforce the variables~$s_i$ to equal the binary expansion of~$s$.
\begin{align}
s&\in\{0,\ldots,t-1\}\label{range:s}\\
s_0,\ldots,s_{\ell-1}&\in\{0,1\}\label{range:si}\\
s&=\sum_{i=0}^{\ell-1}2^is_i\label{constraint:binaryexpansion}
\end{align}

Next we create a variable~$r$ and auxiliary variables~$s_{ij}$ and~$d_{ij}$ with the sole purpose of enforcing~$r=s^2$ but using only linear constraints.
\begin{align}
r&\in\{0,\ldots,(t-1)^2\}\label{range:r}\\
s_{ij},d_{ij}&\in\{0,1\} && \mbox{for all~$i,j\in\{0,\ldots,\ell-1\}$} \label{range:sijdij}\\
s_{ij}&=\frac{1}{2}(s_i+s_j-d_{ij}) && \mbox{for all~$i,j\in\{0,\ldots,\ell-1\}$} \label{constraint:multiplication}\\
r&=\sum_{i=0}^{\ell-1}\left(\sum_{j=0}^{\ell-1}2^{i+j}s_{ij}\right)\label{constraint:computesquare}
\end{align}

We introduce variables~$y_v$ for all~$v\in V$ which will encode a~$k$-clique in instance~$s$. These variables are restricted to take one of two values that depend on~$s$ in a quadratic way (using~$r=s^2$; recall that~$t$ is a constant).
\begin{align}
y_v&\leq 2ts-r+2\quad\quad\mbox{for all~$v\in V$}\label{constraint:upperbound}\\
y_v&\geq 2ts-r+1\quad\quad\mbox{for all~$v\in V$}\label{constraint:lowerbound}
\end{align}
That is, we restrict~$y_v$ to~$y_v\in\{2ts-r+1,2ts-r+2\}\subseteq\{0,\ldots,2t^2\}$.

Now we get to the central piece of the ILP, namely the constraints which will enforce the non-edges of the graph~$G_s$. However, we of course need to add those constraints for all input graphs~$G_i$. It is crucial that only the constraints for~$i=s$ have an effect on the~$y$-variables (beyond the restriction already imposed by~\cref{constraint:upperbound} and \cref{constraint:lowerbound}). We add the following for all~$\{u,v\}\subseteq V$ and instance numbers~$i\in\{0,\ldots,t-1\}$ if~$\{u,v\}$ is not an edge of~$G_i$.
\begin{align}
\mbox{if~$\{u,v\}\notin E_i$ then}\quad\quad y_u+y_v&\leq 4\cdot(t-i)\cdot s+2i^2+3\label{constraint:nonedge}
\end{align}

Finally, we take the sum over all~$y_v$, deduct~$n$ times the minimum value~$2ts-r+1$ and check that this is at least as large as the specified target value~$k$.
\begin{align}
\left(\sum_{v\in V}y_v\right)-n\cdot(2ts-r+1)\geq k \label{constraint:targetvalue}
\end{align}

This completes the essential part of the construction. Formally we still need to convert all constraints into form~$Ax\leq b$ and to use only three variables in each constraint. However, the proof will be given regarding the more accessible constraints stated above. 

\subparagraph{Construction--formal part.}
We use~$x$ to refer to the vector of all variables used above, e.g.,~$x=(s,s_0,\ldots,s_{\ell-1},r,s_{00},\ldots,s_{\ell-1,\ell-1},d_{00},\ldots,d_{\ell-1,\ell-1},y_{v_1},\ldots,y_{v_n})$.
Thus, at this point, we use~$1+\ell+1+2\cdot\ell^2+n\in\Oh(n+\ell^2)=(n+\log t)^{\Oh(1)}$ variables.

To formally complete the construction one now needs to translate all constraints to form~$Ax\leq b$. Furthermore, using auxiliary variables, one needs to convert this to~$A'x'\leq b'$ such that~$A'$ has at most three non-zero entries in each row. It is clear that all range constraints, namely \cref{range:s}, \cref{range:si}, \cref{range:r}, and \cref{range:sijdij} can be expressed by two linear inequalities with one variable each. Also the constraints \cref{constraint:upperbound}, \cref{constraint:lowerbound}, and \cref{constraint:nonedge} need no further treatment since they are already linear inequalities with at most three variables each (that is, it suffices to rearrange them to have all variables on one side when transforming to~$Ax\leq b$).

For the remaining constraints, namely~\cref{constraint:binaryexpansion},~\cref{constraint:multiplication},~\cref{constraint:computesquare}, and~\cref{constraint:targetvalue} we need to use auxiliary variables to replace them by small sets of linear inequalities with at most three variables each. We sketch this for~\cref{constraint:binaryexpansion}, which requires expressing a sum using partial sums. We introduce~$\ell$ new variables~$z_0,\ldots,z_{\ell-1}$ and replace~$s=\sum_{i=0}^{\ell-1}2^is_i$ as follows; the intuition is that~$z_j=\sum_{i=0}^{j}2^is_i$.
\begin{align*}
z_0-s_0\leq 0 &&
-z_0+s_0\leq 0\\
z_i-z_{i-1}-2^i s_i\leq 0 &&
-z_i+z_{i-1}+2^i s_i \leq 0 &&\quad\quad\quad\mbox{for~$i\in\{1,\ldots,\ell-1\}$}\\
s-z_{\ell-1}\leq 0 &&
-s+z_{\ell-1}\leq 0
\end{align*}

We use~$\ell$ variables for constraint~\cref{constraint:binaryexpansion},~$\ell^2$ variables for constraints~\cref{constraint:multiplication},~$\ell^2$ variables for constraint~\cref{constraint:computesquare}, and~$n+2$ variables for constraint~\cref{constraint:targetvalue}. Altogether we use~$\Oh(n+\ell^2)=\Oh(n+\log^2 t)$ additional variables. In total our ILP uses~$\Oh(n+\log^2 t)=\Oh((n+\log t)^{\Oh(1)})$ variables, which is consistent with the definition of a cross-composition (polynomial in the largest input instance plus the logarithm of the number of instances).

\subparagraph{Completeness.} To show correctness, let us first assume that some instance~$(G_\i,k)$ is \yes for \clique, and let~$C\subseteq V$ be some~$k$-clique in~$G_\i$. We will determine a value~$x'=x'(\i,C)$ such that~$A'x'\leq b'$ (this is the system obtained by transforming all constraints to inequalities in at most three variables). Again, for clarity, we will simply pick values only for all variables used in the succinct representation (i.e., all variables occurring in~\cref{range:s}--\cref{constraint:targetvalue}) and check that all (in-)equalities are satisfied. It is obvious how to extend this to the auxiliary variables that are required for formally writing down all constraints as~$A'x'\leq b'$.

First of all, we set~$s=\i\in\{0,\ldots,t-1\}$ and set the variables~$s_0,\ldots,s_{\ell-1}\in\{0,1\}$ such that they match the binary expansion of~$s$. Clearly, this satisfies constraint~$\cref{constraint:binaryexpansion}$ as well as the range of each encountered variable. It follows from Proposition~\ref{proposition:multiplication} that we can set~$s_{ij}=s_i\cdot s_j\in\{0,1\}$ and also find feasible values for all~$d_{ij}$ such that all constraints~\cref{constraint:multiplication} are satisfied. Hence, by Proposition~\ref{proposition:square} we can set~$r=s^2$ while satisfying constraint~\cref{constraint:computesquare}.

Now, let us assign values to variables~$y_v$ for~$v\in V$ as follows
\begin{align*}
y_v=\begin{cases}
    2ts-r+2 & \mbox{if,~$v\in C$}\\
    2ts-r+1\ & \mbox{if~$v\notin C$.}
    \end{cases}
\end{align*}
It is easy to see that this choice satisfies both constraints~\cref{constraint:lowerbound} and~\cref{constraint:targetvalue}, since~$|C|=k$.

Finally, we have to check that the (non-)edge constraints~\cref{constraint:nonedge} are satisfied for all~$i\in\{0,\ldots,t-1\}$ and all edges~$\{u,v\}$. There are two cases, namely~$i=\i$ and~$i\neq \i$, i.e., we have to satisfy constraints for~$G_\i$ (using the fact that~$C$ is a clique) but also constraints created for graphs~$G_i$ with~$i\neq \i$. 

Let us first consider the case~$i\neq \i$; concretely, we take the maximum value for~$y_u+y_v$, namely~$2\cdot(2ts-r+2)$,  and compare it to the value of constraint~\cref{constraint:nonedge}, namely~$4\cdot(t-i)\cdot s+2i^2+3$, using that~$r=s^2$ and~$s=\i$:
\begin{align*}
&& 4\cdot(t-i)\cdot s+2i^2+3&\geq 2\cdot(2ts-r+2)\\
\Leftrightarrow&& 4ts-4is+2i^2+3&\geq 4ts-2s^2+4\\
\Leftrightarrow&& 2s^2-4is+2i^2-1&\geq 0\\
\Leftrightarrow&& 2(s-i)^2-1&\geq 0.
\end{align*}
Since~$s=\i$ the last inequality holds if~$i\neq \i$, which is exactly what we assumed. Thus all non-edge constraints for graphs~$G_i$ with~$i\neq \i$ are satisfied.

We now consider the non-edge constraints for~$G_\i$. We compute the difference between the bound of constraint~\cref{constraint:nonedge} and the minimum value of~$y_u+y_v$, namely~$2\cdot(2ts-r+1)$, to check that our assignment to~$y$-variables is feasible. Note that~$r=s^2$ and~$s=\i=i$:
\begin{align*}
&\quad(4\cdot(t-i)\cdot s+2i^2+3)-2\cdot (2ts-r+1) =4ts-4is+2i^2+3-4ts+2s^2-2 = 1.
\end{align*}
Thus, if~$\{u,v\}\notin E_\i$ then at most one of~$y_u$ and~$y_v$ can take value~$2ts-r+2$ without violating constraint~\cref{constraint:nonedge}. Otherwise, if~$\{u,v\}\in E_\i$, then, from the perspective of this edge, both variables may take value~$2ts-r+2$. Clearly, this is consistent with our assignment to the~$y$-variables, since the larger value~$2ts-r+2$ is assigned to all variables that correspond to the vertices of the~$k$-clique~$C$.

\subparagraph{Soundness.} For soundness, let us assume that we have a feasible solution~$x'$ such that~$A'x'\leq b'$. Again, we consider only the variables of constraints~\cref{range:s}--\cref{constraint:targetvalue}. Recall that~$s\in\{0,\ldots,t-1\}$. We claim that the graph~$G_s$ must have a clique of size at least~$k$.

Observe that all variables~$y_v$ for~$v\in V$ have value~$2ts-r+2$ or~$2ts-r+1$ in~$x$ due to constraints~\cref{constraint:upperbound} and~\cref{constraint:lowerbound}. We define a vertex subset~$C\subseteq V$ by stating that it contains exactly those vertices~$v$ with~$y_v=2ts-r+2$. The goal is to show that~$C$ is a clique in~$G_s$.

As for the converse direction, feasible solutions are required to have~$r=s^2$, which follows from Propositions~\ref{proposition:multiplication} and~\ref{proposition:square}; note that obviously the variables~$s_0,\ldots,s_{\ell-1}$ need to equal the binary expansion of~$s$ due to constraint~\cref{constraint:binaryexpansion}.

Now, we consider the non-edge constraints~\cref{constraint:nonedge} for~$G_s$ and compare them to the lower bound of~$2ts-r+1$ for variables~$y_v$; we already did this computation earlier, again we have~$r=s^2$ and~$s=i$:
\begin{align*}
4\cdot(t-i)\cdot s+2i^2+3-2\cdot(2ts-r+1)=1.
\end{align*}
Hence, for every non-edge~$\{u,v\}$ of~$G_s$ among~$y_u$ and~$y_v$ at most one of the two variables can take the larger value~$2ts-r+2$. Therefore, when~$y_u=y_v=2ts-r+2$, then~$\{u,v\}$ is an edge of~$G_s$. Thus,~$C$ is a clique in~$G_s$.
It follows from~$y_v\in\{2ts-r+1,2ts-r+2\}$ that constraint~\cref{constraint:targetvalue} enforces that~$y_v=2ts-r+2$ for at least~$k$ vertices~$v\in V$. Therefore,~$C$ is of size~$k$. This completes the \OR-cross-composition from \clique.

By Theorem~\ref{theorem:crosscomposition}, \RSILPFn has no polynomial kernelization unless \containment and the polynomial hierarchy collapses~\cite{BodlaenderJK11}.
\end{proof}

The cross-composition in the proof of Theorem~\ref{theorem:sparselowerbound} uses variables of range polynomial in~$t$ and coefficients of absolute value bounded polynomially in~$t$. We will discuss the aspect of variable range in the following section. The size of the coefficients is also interesting since an ILP with integer coefficients (like the one we create) can be easily encoded in space~$\Oh(nm\log C)$ where~$C$ is the absolute value of the largest coefficient. As the given cross-composition has~$C=t^{\Oh(1)}$ we see that space polynomial in~$\log t$ suffices, and hence the lower bound applies also to \RSILPF{($n+\log C$)}; regarding parameters~$n$,~$m$, and~$\log C$ this is a maximal negative case since parameterization by~$n+m+\log C$ trivially gives a polynomial kernel (by the mentioned encoding). Put differently, the obstacle established in the lower bound proof is the large number of coefficients; coefficients of \emph{value} polynomial in~$t$ are required to make this work, but it is not their \emph{encoding size} that is the obstacle for polynomial kernels.

\section{A polynomial kernelization for sparse ILP with bounded range}\label{section:sparsedomainboundedkernel}

We have seen that for \RSILPFn there is no polynomial kernelization unless \containment. The proof relies strongly on having variables of high range in order to encode the constraints of~$t$ instances of~$\clique$. 
It is natural to ask, whether a similar result can be proven when the maximum range of any variable is small, e.g., polynomial in the number of variables. We show that this is not the case by presenting a polynomial kernelization for the variant where the maximum range is an additional parameter.
The problem is defined as follows.

\parameterizedproblem{\RSBILPFnd}{An~$r$-row-sparse matrix~$A\in \Q^{m\times n}$ and a vector~$b\in\Q^m$.}{$n+d.$}{Is there a vector~$x\in\{0,\ldots,d-1\}^n$ such that~$Ax\leq b$?}

Note that we restrict to the seemingly special case where each variable is not only restricted to~$d$ different consecutive values, but in fact all variables must take values from~$\{0,\ldots,d-1\}$. It can be easily checked that this is as general as allowing any~$d$ consecutive integers, since we could shift variables to range~$\{0,\ldots,d-1\}$ without changing feasibility (by changing~$b$).

\begin{theorem}
$r$-\textsc{Sparse Bounded Integer Linear Programming Feasibility}($n,d$) admits a polynomial kernelization with size~$\Oh(n^r\cdot d^r\cdot \log nd)$.
\end{theorem}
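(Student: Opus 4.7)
The plan is to exploit that every constraint involves at most $r$ variables, so its feasibility on the bounded domain $\{0,\ldots,d-1\}^n$ is completely determined by its restriction to the $d^r$-sized grid indexed by its support. There are only $\binom{n}{\leq r}=\Oh(n^r)$ possible supports, giving only $\Oh(n^r\cdot d^r)$ ``support--assignment'' pairs in total, and the kernel will contain roughly one constraint per such pair.

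For each subset $S\subseteq[n]$ with $|S|\leq r$ I would first compute, by brute force over the $d^{|S|}\leq d^r$ candidate assignments, the set $F_S\subseteq\{0,\ldots,d-1\}^S$ of assignments to the variables in $S$ that satisfy every original constraint whose support is contained in $S$. Then, for each pair $(S,\alpha)$ with $\alpha\notin F_S$, I would select and remember one original witness constraint with support inside $S$ that is violated by $\alpha$; such a constraint must exist by definition of $F_S$. The kernel is the collection of all remembered witnesses. After removing duplicates there are at most $\Oh(n^r\cdot d^r)$ constraints, each trivially $r$-row-sparse because they are inherited from the input. Correctness is immediate in both directions: any feasible solution to the original ILP still satisfies the smaller collection; conversely, if $\beta\in\{0,\ldots,d-1\}^n$ satisfies the kernel then $\beta_S\in F_S$ is forced for every $|S|\leq r$ (otherwise the witness for the pair $(S,\beta_S)$ would be violated), which in turn implies that $\beta$ satisfies every original constraint.

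The step I expect to be the main obstacle is bounding the \emph{encoding size} of each kept constraint, since the input coefficients may be arbitrarily large. For this I would prove a normalization lemma: every linear inequality $a^Tx\leq b$ with support on $r$ variables in $\{0,\ldots,d-1\}^r$ is equivalent, \emph{on the grid}, to an integer inequality whose coefficients have absolute value at most $d^{\Oh(r)}$. The intuition is that the grid-behavior of a half-space depends only on where its boundary hyperplane sits relative to the $d^r$ lattice points, so one can always perturb the boundary until it passes through at most $r$ of those lattice points without changing the set of grid points on either side; Cramer's rule applied to the resulting $r\times r$ integer system (whose entries have absolute value at most $d$) then yields integer coefficients of magnitude at most $r!\cdot d^{r}=d^{\Oh(r)}$. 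After replacing each kept constraint by its normalized version, encoding variable indices plus coefficients takes $\Oh(\log(nd))$ bits per constraint, yielding the claimed total kernel size $\Oh(n^r\cdot d^r\cdot \log nd)$.
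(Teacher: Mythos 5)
Your proposal is correct, but it takes a genuinely different route from the paper. Both proofs start from the same observation -- group the constraints by their $\leq r$-element supports and determine, by brute force over the $d^r$ grid points, which local assignments are infeasible -- but they diverge in how the surviving information is re-encoded. The paper discards \emph{all} original constraints and, for each infeasible point $P=(p_1,\ldots,p_r)$, builds a fresh gadget of $\Oh(r)$ inequalities with $\Oh(r)$ auxiliary variables (of the form $x_i=p_i+s_i-d\cdot t_i$ together with $\sum_i s_i\geq 1$) that excludes exactly that point; the gadget's coefficients lie in $\{-1,0,1,d\}$ by construction, so no coefficient-reduction argument is ever needed, at the price of inflating the number of variables of the output instance to $\Oh(n^r d^r)$. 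You instead keep one original \emph{witness} constraint per infeasible support--assignment pair, which preserves the variable set and the domain exactly (arguably a cleaner kernel, since the output parameter $n+d$ is unchanged), but forces you to confront the coefficient-size issue head-on via your normalization lemma. That lemma is true, and you correctly identified it as the crux; however, your perturbation sketch (``move the hyperplane until it passes through at most $r$ lattice points'') is the fragile part -- it needs care with boundary grid points and degeneracies. The cleaner way to prove it is the standard threshold-function argument: set up the separation system $a'^Tq-b'\leq 0$ for feasible grid points $q$ and $a'^Tp-b'\geq 1$ for infeasible ones, observe it is satisfiable by a scaling of the original constraint, and take a basic feasible solution; Cramer's rule on an $(r+1)\times(r+1)$ subsystem with entries bounded by $d$ gives integer coefficients of magnitude $(rd)^{\Oh(r)}$, which is $\mathrm{poly}(d)$ for the constant $r$ fixed by the problem and hence $\Oh(\log d)$ bits per coefficient. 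With that lemma in place your counting and correctness arguments are sound and yield the same bound $\Oh(n^r\cdot d^r\cdot\log nd)$.
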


\begin{proof}
We assume that~$r\geq 3$ since otherwise the problem can be solved in time~$\Oh(m\cdot d)$ by work of Bar-Yehuda and Rawitz~\cite{BarYehudaR01} and the theorem follows trivially. Recall that for~$r\geq 3$ the problem is NP-hard by a reduction from~$3$-\sat.

The kernelization works by considering all choices of~$r$ of the~$n$ variables and replacing the constraints (i.e., inequalities) in~$Ax\leq b$ which contain only those variables. The starting observation is that there are~$d^r$ choices of picking values for~$r$ variables, and the considered constraints prevent some of those from being feasible. It can be efficiently checked which of the~$d^r$ assignments are feasible. For each infeasible point~$P=(p_1,\ldots,p_r)$ we show how to give a small number of constraints that exactly exclude this point. Together, all those new constraints have the same effect as the original ones, allowing the latter to be discarded.

Let~$x_1,\ldots,x_r$ be any~$r$ of~$n$ variables and let~$\hat{P}$ denote the set of all points~$P=(p_1,\ldots,p_r)$ that are infeasible for constraints only involving~$x_1,\ldots,x_r$. (Note that the whole ILP might be infeasible, but locally we only care for an equivalent replacement of the constraints.) We show constraints that enforce~$(x_1,\ldots,x_r)\neq(p_1,\ldots,p_r)$:
\begin{align}
\forall i\in\{1,\ldots,r\}:&& x_i&=p_i+s_i-d\cdot t_i && s_i\in\{0,\ldots,d-1\}, t_i\in\{0,1\}\label{constraint:checkequal}\\
&&\sum_{i=1}^r s_i&\geq 1 \label{constraint:checkdummies}
\end{align}
This requires~$2r$ variables and~$r+1$ constraints; a few more variables and constraints are required to transform the constraints into an equivalent set of inequalities with at most~$r$ variables each: For constraint~\cref{constraint:checkdummies} it suffices to flip the sign since it is already an inequality on~$r$ variables. For constraints~\cref{constraint:checkequal} we can replace each equality by two equalities using a new auxiliary variable (in fact this is only needed when~$r=3$) and replacing both equalities in turn by two inequalities. We use~$3r$ variables and~$4r+1$ constraints total. Note that all coefficients have values in~$\{-1,0,1,d\}$ and can be encoded by~$\Oh(\log d)$ bits (in fact two bits suffice easily for four values).

Again, we will argue correctness on the more succinct representation, i.e., on~\cref{constraint:checkequal} and~\cref{constraint:checkdummies}.

Assume first that~$(x_1,\ldots,x_r)=(p_1,\ldots,p_r)$. Thus~$0=x_i-p_i=s_i-d\cdot t_i$, which implies that~$s_i=t_i=0$ (taking into account the domains of~$s_i$ and~$t_i$) for all~$i$. Thus constraint~\cref{constraint:checkdummies} is violated, making~$(x_1,\ldots,x_r)=(p_1,\ldots,p_r)$ infeasible.
On the other hand, if~$(x_1,\ldots,x_r)\neq (p_1,\ldots,p_r)$, then there is a position~$j$ with~$x_j\neq p_j$. It follows that~$0<|x_j-p_j|<d$ (due to the range of~$x_j$) which in turn implies that~$s_j\neq 0$ since the contribution of~$d\cdot t_j$ to the equality is a multiple of~$d$. Thus constraint~\cref{constraint:checkdummies} is fulfilled.

It follows that we are able to add constraints which exclude any desired point for~$x_1,\ldots,x_r$. Let us complete the proof. Clearly, if a vector~$x$ fulfills~$Ax\leq b$ then any choice of~$r$ variables from~$x$ fulfills all constraints that contain only these variables. This in turn means that those variables avoid the points that are excluded by the constraints, which implies that they satisfy all our new constraints (since avoiding those points is all that is needed).

Conversely, assume that a vector~$x$ fulfills all new constraints and hence any choice of~$r$ variables avoids all forbidden points. Since any of the original constraints contains at most~$r$ variables, it comes down to forbidding some set of points. Since~$x$ fulfills our new constraints it also avoids all infeasible points for~$Ax\leq b$. Thus,~$x$ satisfies also all original constraints.

Summarizing, we are able to replace all constraints by new constraints with small coefficients, which have the same outcome. Clearly the computations can be performed in polynomial time (the input size dominates~$n$,~$m$, and the encodings of all coefficients in~$A$ and~$b$). Since for any~$r$ variables there are at most~$d^r$ infeasible points, we need at most~$(4r+1)\cdot d^r\cdot \binom{n}{r}=\Oh(d^r\cdot n^r)$ constraints and~$3r\cdot d^r\cdot n^r=\Oh(d^r\cdot n^r)$ variables. The generated equivalent instance can be encoded by~$r\cdot\Oh(\log (d^r\cdot n^r))\cdot \Oh(d^r\cdot n^r)=\Oh(d^r\cdot n^r\cdot \log dn)$ bits, by encoding each constraint (on~$r$ variables) as the binary encoded names of the variables with nonzero coefficients followed by the values of the coefficients. 
\end{proof}

\section{Conclusion}

We prove that the existence of polynomial kernels for \RSILPF with respect to the number~$n$ of variables depends strongly on the maximum range of the variables. If the range is unbounded, then there is no polynomial kernelization under standard assumptions. Otherwise, if the range of each variable is polynomially bounded in~$n$ then we establish a polynomial kernelization. This holds also for the more general case of using the maximum range as an additional parameter.

Future work will be directed at more restricted cases of ILPs in order to obtain more positive kernelization results. Similarly, structural parameters of ILPs seem largely unexplored.

\bibliographystyle{plain}
\bibliography{ilpkernels_arxiv}

\begin{thebibliography}{10}

\bibitem{AtamturkS05}
Alper Atamt{\"u}rk and Martin W.~P. Savelsbergh.
\newblock Integer-programming software systems.
\newblock {\em Annals OR}, 140(1):67--124, 2005.

\bibitem{BarYehudaR01}
Reuven Bar-Yehuda and Dror Rawitz.
\newblock Efficient algorithms for integer programs with two variables per
  constraint.
\newblock {\em Algorithmica}, 29(4):595--609, 2001.

\bibitem{Bodlaender09}
Hans~L. Bodlaender.
\newblock Kernelization: New upper and lower bound techniques.
\newblock In {\em IWPEC}, volume 5917 of {\em LNCS}, pages 17--37. Springer,
  2009.

\bibitem{BodlaenderDFH09}
Hans~L. Bodlaender, Rodney~G. Downey, Michael~R. Fellows, and Danny Hermelin.
\newblock On problems without polynomial kernels.
\newblock {\em J. Comput. Syst. Sci.}, 75(8):423--434, 2009.

\bibitem{BodlaenderFLPST09_meta}
Hans~L. Bodlaender, Fedor~V. Fomin, Daniel Lokshtanov, Eelko Penninkx, Saket
  Saurabh, and Dimitrios~M. Thilikos.
\newblock {(Meta)} {K}ernelization.
\newblock In {\em FOCS}, pages 629--638, 2009.

\bibitem{BodlaenderJK11}
Hans~L. Bodlaender, Bart M.~P. Jansen, and Stefan Kratsch.
\newblock Cross-composition: A new technique for kernelization lower bounds.
\newblock In {\em STACS}, volume~9 of {\em LIPIcs}, pages 165--176. Schloss
  Dagstuhl - Leibniz-Zentrum fuer Informatik, 2011.

\bibitem{Clarkson1995}
Kenneth~L. Clarkson.
\newblock Las vegas algorithms for linear and integer programming when the
  dimension is small.
\newblock {\em J. ACM}, 42(2):488--499, 1995.

\bibitem{DellM10}
Holger Dell and Dieter van Melkebeek.
\newblock Satisfiability allows no nontrivial sparsification unless the
  polynomial-time hierarchy collapses.
\newblock In {\em STOC}, pages 251--260. ACM, 2010.

\bibitem{DomLS09}
Michael Dom, Daniel Lokshtanov, and Saket Saurabh.
\newblock Incompressibility through colors and ids.
\newblock In {\em ICALP (1)}, volume 5555 of {\em LNCS}, pages 378--389.
  Springer, 2009.

\bibitem{Eisenbrand03}
Friedrich Eisenbrand.
\newblock Fast integer programming in fixed dimension.
\newblock In {\em ESA}, volume 2832 of {\em LNCS}, pages 196--207. Springer,
  2003.

\bibitem{EisenbrandS08}
Friedrich Eisenbrand and Gennady Shmonin.
\newblock Parametric integer programming in fixed dimension.
\newblock {\em Math. Oper. Res.}, 33(4):839--850, 2008.

\bibitem{FominLST10}
Fedor~V. Fomin, Daniel Lokshtanov, Saket Saurabh, and Dimitrios~M. Thilikos.
\newblock Bidimensionality and kernels.
\newblock In {\em SODA}, pages 503--510. SIAM, 2010.

\bibitem{FortnowS11}
Lance Fortnow and Rahul Santhanam.
\newblock Infeasibility of instance compression and succinct {PCPs} for {NP}.
\newblock {\em J. Comput. Syst. Sci.}, 77(1):91--106, 2011.

\bibitem{GareyJ79}
M.~R. Garey and David~S. Johnson.
\newblock {\em Computers and Intractability: A Guide to the Theory of
  NP-Completeness}.
\newblock W. H. Freeman, 1979.

\bibitem{GuoN07}
Jiong Guo and Rolf Niedermeier.
\newblock Invitation to data reduction and problem kernelization.
\newblock {\em SIGACT News}, 38(1):31--45, 2007.

\bibitem{HarnikN10}
Danny Harnik and Moni Naor.
\newblock On the compressibility of $\mathcal{NP}$ instances and cryptographic
  applications.
\newblock {\em SIAM J. Comput.}, 39(5):1667--1713, 2010.

\bibitem{Kannan1983}
Ravi Kannan.
\newblock Improved algorithms for integer programming and related lattice
  problems.
\newblock In {\em STOC}, pages 193--206. ACM, 1983.

\bibitem{Kannan80}
Ravindran Kannan.
\newblock A polynomial algorithm for the two-variable integer programming
  problem.
\newblock {\em J. ACM}, 27(1):118--122, 1980.

\bibitem{Kannan87}
Ravindran Kannan.
\newblock Minkowski's convex body theorem and integer programming.
\newblock {\em Mathematics of Operations Research}, 12(3):415--440, 1987.

\bibitem{KratschW12_focs}
Stefan Kratsch and Magnus Wahlstr{\"o}m.
\newblock Representative sets and irrelevant vertices: New tools for
  kernelization.
\newblock In {\em FOCS}, pages 450--459. IEEE Computer Society, 2012.

\bibitem{Lenstra1983}
Hendrik~W. Lenstra.
\newblock Integer programming with a fixed number of variables.
\newblock {\em Mathematics of Operations Research}, 8:538--548, 1983.

\bibitem{Megiddo1984}
Nimrod Megiddo.
\newblock Linear programming in linear time when the dimension is fixed.
\newblock {\em J. ACM}, 31(1):114--127, 1984.

\bibitem{Weihe1998_trains}
Karsten Weihe.
\newblock Covering trains by stations or the power of data reduction.
\newblock In {\em Proceedings of \emph{ALENEX}}, pages 1--8, 1998.

\end{thebibliography}
\end{document}